\acrodef{vlc}[VLC]{visible light communication}
\acrodef{mumiso}[MU-MISO]{multi-user multiple-input single-output}
\acrodef{csi}[CSI]{channel state information}
\acrodef{rf}[RF]{radio frequency}
\acrodef{socp}[SOCP]{second order cone programming}
\acrodef{led}[LED]{light emitting diode}
\acrodef{ul}[UL]{uplink}
\acrodef{dl}[DL]{downlink}
\acrodef{tdd}[TDD]{time division duplexing}
\acrodef{fdd}[FDD]{frequency division duplexing}
\acrodef{mimo}[MIMO]{multiple-input multiple-outpout}
\acrodef{mse}[MSE]{mean square error}
\acrodef{pd}[PD]{photo diode}
\acrodef{dc}[DC]{direct current}
\acrodef{snir}[SNIR]{signal to noise plus interference ratio}
\acrodef{snr}[SNR]{signal to noise ratio}
\acrodef{ser}[SER]{symbol error rate}
\acrodef{pdf}[PDF]{probability density function}
\acrodef{los}[LOS]{line of sight}
\acrodef{nlos}[NLOS]{no line of sight}
\acrodef{ppp}[PPP]{Poisson point process}
\acrodef{bs}[BS]{base station}
\acrodef{mmw}[mmWave]{millimeter waves}
\acrodef{rv}[r.v.]{random variable}
\acrodef{zf}[ZF]{zero forcing}
\newtheorem{theorem}{Theorem}
\newtheorem{lemma}[theorem]{Lemma}
\begin{document}
\title{Robust Precoding for Multi-User Visible Light Communications with Quantized Channel Information}

\author{Olga Muñoz, Antonio Pascual-Iserte, Guillermo San Arranz \\Department of Signal Theory and Communications, Universitat Politècnica de Catalunya, 08034 Barcelona, Spain \\ olga.munoz@upc.edu (O.M.); antonio.pascual@upc.edu (A.P.-I.); guille.san.arranz@gmail.com (G.S.A.)
\thanks{These authors contributed equally to this work.\\This work has been funded by the Agencia Estatal de Investigación, Ministerio de Ciencia e Innovación (Gobierno de España), MCIN/AEI/10.13039/501100011033, through the project ROUTE56 PID2019-104945GB-I00.\\

DOI: 10.3390/s22239238}}

\markboth{Accepted paper at MDPI Sensors (Sensors 2022, 22, 9238)}
{}

\maketitle

\begin{abstract}
In this paper, we address the design of \ac{mumiso} precoders for indoor \ac{vlc} systems. The goal is to minimize the transmitted optical power per \ac{led} under imperfect \ac{csi} at the transmitter side. Robust precoders for imperfect \ac{csi} available in the literature include noisy and outdated channel estimation cases. However, to the best of our knowledge, no work has considered adding robustness against channel quantization. In this paper, we fill this gap by addressing the case of imperfect \ac{csi} due to the quantization of \ac{vlc} channels. We model the quantization errors in the \ac{csi} through polyhedric uncertainty regions. For polyhedric uncertainty regions and positive real channels, as is the case of \ac{vlc} channels, we show that the robust precoder against channel quantization errors that minimizes the transmitted optical power while guaranteeing a target \ac{snir} per user is the solution of a \ac{socp} problem. Finally, we evaluate its performance under different quantization levels through numerical simulations.
\end{abstract}

\begin{IEEEkeywords}
visible light communications; robust precoding; channel state information; quantization; convex optimization
\end{IEEEkeywords}








\section{Introduction}
\subsection{Background and Motivation}
In the past years, researchers have proposed different multi-user precoding techniques for the visible light communication (VLC) channel, assuming the availability of channel state information (CSI) at the transmitter side. In \ac{rf} systems, there are methods to acquire the \ac{csi}, such as those based on the exploitation of channel reciprocity between \ac{ul} and \ac{dl} in \ac{tdd} or the use of a feedback channel in \ac{fdd}. Unfortunately, establishing an \ac{ul} feedback channel in \ac{vlc} is not straightforward. One possibility, which does not relay on channel reciprocity, is to send the \ac{ul} signals through an \ac{rf} channel. In this case, once the receivers have estimated the \ac{dl} channel, the estimates can be fed back to the transmitter through the \ac{rf} \ac{ul} link. Consequently, the \ac{csi} available at the transmitter side will be far from being accurate, as it can be noisy, outdated, and will contain quantization errors.

In this paper, we address the design of a multi-user multiple-input single-output (MU-MISO) precoder robust against channel quantization errors, which will be the dominant source of error in the \ac{csi} if the \ac{snr} during the channel estimation is high or if the number of quantization bits is low. We model the quantization errors in the \ac{csi} through polyhedric uncertainty regions that contain the actual channel. Interestingly, for polyhedric uncertainty regions and the fact that, for \ac{vlc} communications, the transmitted signals and the channels are positive real magnitudes allows us obtaining a non-approximated solution for the robust precoder. This design has not been developed before for \ac{rf} or \ac{vlc} channels. In particular, for \ac{vlc} channels, the only available solutions are those considering estimation noise or outdated \ac{csi}, which do not fit the model corresponding to quantization errors, as explained in what follows.

\subsection{Related Work}
In Ref. \cite{RoserGlobecom}, the authors consider a \ac{mumiso} \ac{vlc} system in a scenario where the passengers in a train wagon receive data through the light emitting diodes (LEDs) placed at the ceiling. Similarly, in Refs. \cite{antiguo, ZHAO2017341,ZFperfectCSI,9463422,rev1} the authors address the design of the \ac{dl} in VLC systems under the perfect CSI assumption. In this paper, as mentioned before, we consider a \ac{mumiso} scenario similar to those considered in the previous references (and, in particular, in Ref. \cite{RoserGlobecom}) but assuming that the CSI is not perfect. The scenario is summarized in Section \ref{sec:Scenario} for completeness.

Imperfect \ac{csi} has been considered in Refs. \cite{RobustMMSE, noisyCSI, rev3, rev2}. The authors of Refs. \cite{RobustMMSE,noisyCSI} model the channel error as zero-mean Gaussian distributed with a certain covariance matrix to account for noisy channel estimation.
The authors of Ref. \cite{rev3} impose an upper bound on the Frobenius norm of the channel error to account for outdated \ac{csi}. Finally, the authors of Ref. \cite{rev2} consider both the stochastic and bounded error model to account for both channel estimation errors and outdated \ac{csi}. None of these papers \cite{RobustMMSE, noisyCSI, rev3, rev2}, however, have considered channel estimation errors due to the quantization of the channel response. In this paper, we fill this gap by considering imperfect CSI due to the quantization of VLC channels. Note that, in the absence of channel reciprocity, for example if the \ac{ul} cannot be optical, the receiver needs to report the channel response through a digital feedback channel. This makes quantization a primary CSI error source, particularly when the channel estimation \ac{snr} is good or when the number of quantization bits is low.

Finally, regarding the precoder designs, some authors design \ac{zf} linear precoders while maximizing the weighted sum rate or other figures of merit, e.g., \cite{RoserGlobecom,rev1}. Other works do not relay on \ac{zf} and, instead, focus on minimizing the global \ac{mse}, e.g., \cite{RobustMMSE}, or the average \ac{mse} and the worst-case \ac{mse}, e.g., \cite{rev2}, for the noisy \ac{csi} and the outdated \ac{csi}, respectively. Similarly, in Refs. \cite{noisyCSI,rev3}, the authors address the maximization of the minimum signal to noise plus interference ratio (SNIR), also for the cases of noisy and outdated \ac{csi}, respectively. In our work, instead of \ac{zf} precoders as in Refs. \cite{RoserGlobecom,rev1}, we minimize the optical power per \ac{led} subject to a target \ac{snir} per user without forcing null interference, which makes the design more versatile, and account explicitly for quantization errors. On the other hand, instead of considering a global figure of merit that cannot guarantee the quality of individual users, such as the global or the average \ac{mse}, we deal with individual user \ac{snir} constraints.

\subsection{Main Contributions}
Our main contributions and novelties in this paper are the following:
\begin{itemize}
\item  The design of a robust \ac{mumiso} precoder for \ac{vlc} systems under imperfect \ac{csi} at the transmitter due to channel quantization. We show that for \ac{vlc} with \ac{csi} imperfections due to quantization, the robust precoder (optimum in terms of transmitted optical power per \ac{led}) is the solution of a second order cone programming (SOCP) problem \cite{Boyd04};
\item  Quantization guidelines and discussion of their impact on the
beamforming design;
\item The evaluation of the performance of the robust precoder under different levels of quantization. 
\end{itemize}

\section{System Model}
\label{sec:Scenario}
We consider a \ac{vlc} system, where a transmitter equipped with $L$ \ac{led}s communicates simultaneously with $K$ users (with indexes $k=1,\ldots, K$) deployed in the same room. 

The transmitter modulates the intensity of the light emitted by each \ac{led} according to the information symbols. Let $\mathbf{s}=(s_1,s_2,...,s_K)^T \in \mathbb{R}^{K \times 1}$ be the vector containing the information-bearing real symbols, one per user, at time $t$, although the time index $t$ is omitted for simplification. We model, without loss of generality, each symbol as a \ac{rv} taken from an alphabet with zero mean $\mathbb{E}\{s_k\}=0$ and unit variance $\mathbb{E}\{s_k^2\}=1$. We assume that the dynamic range of each symbol is limited so that $-A_k\leq s_k \leq A_k$.

Using a set of real beamvectors $\mathbf{w}_k \in  \mathbb{R}^{L \times 1}$, $k=1,...,K$, the transmitter combines the set of transmitted symbols at time $t$, $\{s_k\}_{k=1}^K$. Therefore, the vector  $\mathbf{x} \in  \mathbb{R}^{L \times 1}$  containing the $K$ optical transmitted symbols at time $t$ (omitted again for simplification) can be written as
\begin{equation}
\mathbf{x}= \sum_{i=1}^{K} \mathbf{w}_i s_i+\beta\mathbf{1}\in \mathbb{R}_+^{L \times 1},
\end{equation}
where $\beta$ is the \ac{dc} component equal to the average optical power emitted per LED. Each component of the vector $\mathbf{x}$ , at any given instant $t$, corresponds to the optical transmit power per individual LED, which is a positive physical magnitude and is usually limited due the technological constraints and eye safety reasons. Therefore, the  set of precoders $\{\mathbf{w}_k\}_{k=1}^K$ should be designed to guarantee that, for any combination of symbols, every component of $\mathbf{x}$ is positive and below the maximum allowed optical power, $P_{\max}$.  On the other hand,  $\beta$ determines the average light intensity. Because of this, in practice, $\beta$ is a preset value, equal for all the LEDs, and we will assume so in this work.

At the receiver side, each user is equipped with a single \ac{pd} that provides an electrical signal according to the incident light and the responsivity $\rho$ of the \ac{pd}  (i.e., the ratio of the output photocurrent to the input optical power measured in [A/W]). The incident light depends on the light emitted by the set of \ac{led}s and the channel gain between  the $l$-th  \ac{led} and the $k$-th \ac{pd} denoted by $h_{k,l}\in\mathbb{R}_+$ (a magnitude which is, again, real and positive). The channel gain, $h_{k,l}$, depends on the distance between the $l$-th \ac{led} and the $k$-th \ac{pd}, the incident angle, and the irradiation angle, as well as other parameters (see \cite{RoserGlobecom} for a full description). The column vector containing all the channel gains from all the \ac{led}s to the $k$-th user's \ac{pd}  is given by $\mathbf{h}_k=(h_{k,1},h_{k,2},...,h_{k,L})^T \in \mathbb{R}_+^{L \times 1}$.

Accordingly, the electrical signal at the output of the $k$-th \ac{pd}, $y_k \in \mathbb{R}$, is equal to 
\begin{equation}
y_k=\rho\mathbf{h}_k^T\mathbf{x}+n_k=\rho\mathbf{h}_k^T\sum_{i=1}^K \mathbf{w}_is_i+\rho\beta\mathbf{h}_k^T\mathbf{1}+n_k,\label{eq:rx_signal}
\end{equation}
where $n_k$ is the real-valued additive white Gaussian noise with zero mean and variance $\sigma_k^2$ modeling the thermal and shot noise \cite{haas}. The term $\beta\mathbf{h}_k^T\mathbf{1}$ is a \ac{dc} component that can be easily estimated and removed by the receiver as it does not provide any information about the transmitted symbols. After removing the \ac{dc} component, the electrical \ac{snir} at the $k$-th \ac{pd} can be written as (note that the receiver converts the optical power to current according to the responsivity factor $\rho$, which implies that the receiver electrical \ac{snr} in \ac{vlc} is proportional to the square of the received optical average power, while in RF it is directly proportional to the received average power, as explained in Ref. \cite{RoserGlobecom}):
\begin{equation}
\mbox{SNIR}_k=\frac{\rho^2|\mathbf{h}_k^T\mathbf{w}_k|^2}{\sigma_k^2+\sum_{i\neq k}\rho^2|\mathbf{h}_k^T\mathbf{w}_i|^2}. \label{eq:SNIR}
\end{equation}

The \ac{ser} depends on the \ac{snir}. This means that, if a minimum detection quality in terms of a maximum \ac{ser} is required, this can be translated equivalently into a minimum \ac{snir} requirement per user denoted by $\gamma_k$. Note that the relation between the maximum \ac{ser} and $\gamma_k$ depends on the size of the adopted modulation strategy (e.g., $M$-PAM), that is, the transmission rate \cite{haas}. If different users use different modulations, the corresponding minimum required \ac{snir}s will also differ.

\section{Problem Formulation}
\label{sec:Problem}

In this section, we address the design of a robust \ac{mumiso} precoder under imperfect \ac{csi} at the transmitter side for the \ac{vlc} system described Section \ref{sec:Scenario}. We assume that the receivers report the \ac{dl} channels to the transmitter through a feedback channel. Accordingly, the informed channels may differ from the actual channels due to estimation and reporting errors:
\begin{itemize}
\item  If the source of \ac{csi} imperfection is the quantization, the actual channel will be within an uncertainty region whose shape and size depend on the quantization type (scalar or vector quantization) and the number of quantization bits;

\item If the source of imperfection is an outdated \ac{csi}, the effect can be modeled by imposing an upper bound on the Frobenius norm of the channel error, as in Refs. \cite{rev3,rev2}. In this case, the actual channel will be within a spherical uncertainty region;

\item If the source of \ac{csi} imperfection is the channel estimation noise (assumed as Gaussian in Refs. \cite{RobustMMSE,noisyCSI}), the norm of the error in the \ac{csi} will be upper-bounded by a threshold with a certain probability $p_{in}<1$ and, therefore, the corresponding uncertainty region will be spherical. The value of the threshold depends on the estimation \ac{snr} and the value of $p_{in}$.
\end{itemize}

If the three previous sources for \ac{csi} imperfection are simultaneously present, the global uncertainty region is a convolutive combination of the different regions \cite{pascual_2006}.
In this paper, we do not consider the channel estimation noise and the outdated channel information and assume that the quantization effect is dominant, which is a valid assumption in the case of having good channel estimation \ac{snr} or a low number of quantization bits and when the channel does not vary quickly. For simplicity, we also assume that each receiver $k$ quantizes the acquired channel response $\mathbf{h}_k$ independently. 

For a channel quantizer based on \emph{vector quantization} and characterized by a set of $N$ reconstruction points $\{\mathbf{c}_i \in \mathbb{R}^{L\times 1},i=1,\ldots,N\}$, the uncertainty region associated to each reconstruction point (or centroid) $\mathbf{c}_i$ is defined as:
\begin{equation}
\mathcal{R}_i=\{\mathbf{h} \in \mathbb{R}_+^{L\times 1} \quad | \quad ||\mathbf{h}-\mathbf{c}_i|| \leq ||\mathbf{h}-\mathbf{c}_j|| \quad \forall j\neq i\}.
\end{equation}

The centroids $\{\mathbf{c}_i\}_{i=1}^N$ correspond to the $N$ codewords composing the quantizer codebook. Therefore, each receiver requires sending ceil$({\log_2(N)})$ bits for \ac{csi} feedback. Note that $\mathcal{R}_i$ is the Voronoi partition around the centroid  $\mathbf{c}_i$ with respect to (w.r.t.) the set $\{\mathbf{c}_i\}_{i=1}^N$. As the space is a finite-dimensional Euclidean space, these Voronoi cells are convex polygons completely defined by their vertices \cite{Boyd04}, that is, polyhedric uncertainty regions.

The previous notation encompasses, as a particular case, \emph{scalar quantization}, that is, an independent quantization of each component of the channel vector. In this particular case, the uncertainty regions are rectangular and the number of vertices of each region is $2^L$.

In what follows, we refer to the uncertainty region of the channel informed by the $k$-th user as  $\mathcal{H}_k$. For the $k$-th user, $\mathcal{H}_k$ will be one of the Voronoi regions within the set $\{\mathcal{R}_1,...,\mathcal{R}_N\}$. Each region is a convex polygon that can be expressed as the convex hull \cite{Boyd04} of its $J_k$ vertices $\mathbf{h}_k^{(j)},j=1,\ldots,J_k$, that is: $\mathcal{H}_k=\mbox{coh}(\{\mathbf{h}_k^{(j)},j=1,\ldots,J_k\})$.

Having defined the uncertainty regions, in the following we formulate the design of a precoder fulfilling the \ac{snir} constraint per user for any possible channel in the uncertainty region: 

\begin{equation}
\begin{aligned}
 \underset{\{\mathbf{w}_k\},v}{\text{minimize  }}
 & v \\
 \text{subject to  }
 & C1: \frac{\rho^2 |\mathbf{h}_k^T\mathbf{w}_k|^2}{\sigma_k^2+\sum_{i\neq k}\rho^2 |\mathbf{h}_k^T\mathbf{w}_i|^2}\geq \gamma_k, \; \quad \forall \mathbf{h}_k \in \mathcal{H}_k, \quad k = 1, \ldots, K;\\
 & C2: \sum_k{ A_k|\mathbf{e}_l^T\mathbf{w}_k|} \leq v, \; \quad l = 1, \ldots, L; \\
 & C3: 0\leq v \leq \min(\beta, P_{\max}-\beta).
\end{aligned}
\label{eq:imperfect_CSI_0}
\end{equation}

The C1 constraints impose the $K$ minimum \ac{snir} requirements (one per user). In C2, $\mathbf{e}_l= (0, ..., 0, 1, 0, ..., 0)^T$ is a vector whose $l$-th element is equal to 1, and the rest of elements are zero.  The C2 constraints imply that the optical power transmitted per LED is within the interval $[\beta-v,\beta+v]$. The C3 constraint ensures that $[\beta-v,\beta+v]$ is within the interval $[0, P_{\max}]$, and, therefore, the optical power transmitted per LED is positive and not greater than $P_{\max}$.

With some basic manipulations, we can re-write the C1 constraints as follows:
\begin{equation}
C1:\sqrt{\sigma_k^2+\sum_{i\neq k}\rho^2 |\mathbf{h}_k^T\mathbf{w}_i|^2}  \leq \frac{\rho}{\sqrt{\gamma_k}}|\mathbf{h}_k^T\mathbf{w}_k|, \quad \forall \mathbf{h}_k \in \mathcal{H}_k, \quad k = 1, \ldots, K.
\label{eq:alternative_C1}
\end{equation}

Both alternative expressions of constraints C1 (either in \eqref{eq:imperfect_CSI_0} or in \eqref{eq:alternative_C1}) represent $K$ sets of infinite non-convex constraints because the constraint expressed in C1 for each $k$ has to be fulfilled for the infinite set of channels represented by $\mathcal{H}_k$.

\section{Solution for Polyhedric Uncertainty Regions}
\label{sec:Solution}
In this section, we show that, for polyhedric uncertainty regions $\mathcal{H}_k$, problem \eqref{eq:imperfect_CSI_0} can be re-formulated as a convex problem with a finite number of second-order cone constraints. Therefore, the problem becomes a \ac{socp} problem, for which several efficient interior point methods are available \cite{Boyd04}.

As a first step, we show in Lemma \ref{lemma:1} that we can force, without any loss of generality, $\mathbf{w}_k^T\mathbf{h}_k$  to be positive for any channel within the region $\mathcal{H}_k$. 

\begin{lemma}\label{lemma:1}
Any feasible precoder $\mathbf{w}_k$ will be such that $\mathbf{w}_k^T\mathbf{h}_k$ will be always positive or always negative for any $\mathbf{h}_k \in \mathcal{H}_k$.
\end{lemma}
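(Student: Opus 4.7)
The plan is to exploit continuity of the linear map $\mathbf{h}_k \mapsto \mathbf{h}_k^T\mathbf{w}_k$ together with connectedness of $\mathcal{H}_k$, showing that feasibility rules out any zero-crossing inside $\mathcal{H}_k$ and hence rules out any sign change.

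First I would unpack what feasibility of $\mathbf{w}_k$ implies on the inner product $\mathbf{h}_k^T \mathbf{w}_k$. Constraint C1 in \eqref{eq:imperfect_CSI_0} must hold for every $\mathbf{h}_k \in \mathcal{H}_k$, and since $\sigma_k^2 > 0$ and $\gamma_k > 0$, the right-hand side of the SNIR inequality is strictly positive. Consequently $|\mathbf{h}_k^T \mathbf{w}_k|^2 \geq \gamma_k \sigma_k^2/\rho^2 > 0$, so in particular $\mathbf{h}_k^T \mathbf{w}_k \neq 0$ for every $\mathbf{h}_k \in \mathcal{H}_k$.

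Next I would invoke the geometric properties of $\mathcal{H}_k$. By construction in the previous section, $\mathcal{H}_k$ is the convex hull of its vertices $\{\mathbf{h}_k^{(j)}\}_{j=1}^{J_k}$, hence a convex and therefore path-connected subset of $\mathbb{R}_+^{L\times 1}$. The map $f:\mathcal{H}_k \to \mathbb{R}$ defined by $f(\mathbf{h}_k) = \mathbf{h}_k^T \mathbf{w}_k$ is linear and therefore continuous. Suppose, for contradiction, there existed $\mathbf{h}_k^{(a)}, \mathbf{h}_k^{(b)} \in \mathcal{H}_k$ with $f(\mathbf{h}_k^{(a)}) > 0$ and $f(\mathbf{h}_k^{(b)}) < 0$. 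Parameterizing the segment $\mathbf{h}_k(\lambda) = \lambda \mathbf{h}_k^{(a)} + (1-\lambda)\mathbf{h}_k^{(b)}$ for $\lambda \in [0,1]$, which by convexity lies entirely in $\mathcal{H}_k$, the intermediate value theorem would produce some $\lambda^\ast \in (0,1)$ with $f(\mathbf{h}_k(\lambda^\ast)) = 0$, contradicting the strict positivity $|\mathbf{h}_k^T \mathbf{w}_k|^2 > 0$ established above. Therefore $f$ has constant sign on $\mathcal{H}_k$.

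Finally I would add a brief remark that this sign ambiguity is harmless: if the constant sign were negative, replacing $\mathbf{w}_k$ by $-\mathbf{w}_k$ leaves $|\mathbf{h}_k^T \mathbf{w}_k|^2$ and $|\mathbf{e}_l^T \mathbf{w}_k|$ unchanged, so both C1 and C2 are preserved and feasibility is not lost. This justifies the ``without loss of generality'' clause that allows assuming $\mathbf{h}_k^T \mathbf{w}_k > 0$ throughout $\mathcal{H}_k$, which will be crucial for removing the absolute value in the SNIR constraint when converting C1 to a second-order cone form in the subsequent development. There is no real obstacle here: the argument is essentially a one-line application of the intermediate value theorem once the strict lower bound coming from $\sigma_k^2 > 0$ is noted; the only point worth being careful about is that the strict inequality $|\mathbf{h}_k^T \mathbf{w}_k|^2 \geq \gamma_k \sigma_k^2/\rho^2 > 0$ genuinely forbids $f$ from vanishing, rather than merely bounding it away from zero at the vertices.
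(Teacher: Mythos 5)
Your proof is correct and follows essentially the same route as the paper: both argue by contradiction that a sign change would force a convex combination $\bar{\mathbf{h}}_k\in\mathcal{H}_k$ with $\mathbf{w}_k^T\bar{\mathbf{h}}_k=0$, which violates C1 because the noise term $\sigma_k^2>0$ makes the left-hand side strictly positive. The only cosmetic difference is that the paper exhibits the crossing point explicitly via $\alpha=|\mathbf{w}_k^T\mathbf{h}_k^{(2)}|/(|\mathbf{w}_k^T\mathbf{h}_k^{(1)}|+|\mathbf{w}_k^T\mathbf{h}_k^{(2)}|)$ instead of invoking the intermediate value theorem.
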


\begin{proof}
We will prove it by contradiction. Let us assume that there is a feasible $\mathbf{w}_k$, that is, a vector $\mathbf{w}_k$ that satisfies constraints C1, C2, and C3, and two channels in $\mathcal{H}_k$, namely $\mathbf{h}_k^{(1)},\mathbf{h}_k^{(2)}\in\mathcal{H}_k$, such that $\mathbf{w}_k^T\mathbf{h}_k^{(1)} >0$ and $\mathbf{w}_k^T\mathbf{h}_k^{(2)} <0$. Therefore, we can find  $\alpha$ such that $\mathbf{w}_k^T\mathbf{\bar{h}}_k=0$ with $\mathbf{\bar{h}}_k=\alpha \mathbf{h}_k^{(1)}+(1-\alpha)\mathbf{h}_k^{(2)}$ and $0<\alpha<1$. In fact, we have:
\begin{equation}
\alpha=\frac{-\mathbf{w}_k^T\mathbf{h}_k^{(2)}}{\mathbf{w}_k^T\mathbf{h}_k^{(1)}-\mathbf{w}_k^T\mathbf{h}_k^{(2)}}=\frac{|\mathbf{w}_k^T\mathbf{h}_k^{(2)}|}{|\mathbf{w}_k^T\mathbf{h}_k^{(1)}|+|\mathbf{w}_k^T\mathbf{h}_k^{(2)}|}.
\end{equation}

As $\mathcal{H}_k$ is convex, such  $\mathbf{\bar{h}}_k  \in \mathcal{H}_k$. Therefore, due to the noise term, C1 is not fulfilled, and consequently, $\mathbf{w}_k$ cannot be a feasible solution.
\end{proof}

As  $\mathbf{w}_k^T\mathbf{h}_k$ is either always negative or always positive for all $\mathbf{h}_k \in \mathcal{H}_k$ for a feasible value of $\mathbf{w}_k$, we can impose, without loss of generality, that $\mathbf{w}_k^T\mathbf{h}_k >0$. Note that if the optimum precoder, ${\mathbf{w}_k^\star}$, was such that ${\mathbf{w}_k^\star}^T\mathbf{h}_k$ was negative, we could multiply $\mathbf{w}_k^\star$ by $(-1)$ without changing the values of the objective and constraint functions. Imposing strict positivity for any $\mathbf{h}_k \in \mathcal{H}_k$ allows us to remove the absolute value in the right-hand side of C1 in \eqref{eq:alternative_C1}. After removing that absolute value, we have $K$ infinite sets of convex constraints, because a convex constraint has to be fulfilled for a infinite set of channels $\mathbf{h}_k \in \mathcal{H}_k$ with $k=1,...,K$. To deal with this complexity, consider now the following lemma.

\begin{lemma} \label{lemma:2}
Let us take a convex region formulated as $\mathcal{H}_k=\mbox{coh}(\{\mathbf{h}_k^{(j)},j=1,\ldots,J_k\})$ and a function $f_{\mathbf{w}_k}(\mathbf{h}_k)$ parameterized by $\mathbf{w}_k$ and convex w.r.t. $\mathbf{h}_k$. If $f_{\mathbf{w}_k}(\mathbf{h}_k^{(j)})\leq 0$, $j=1,\ldots,J_k$, then $f_{\mathbf{w}_k}(\mathbf{h}_k)\leq 0,\forall \mathbf{h}_k\in\mathcal{H}_k$. 
\end{lemma}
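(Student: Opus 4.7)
The plan is to apply Jensen's inequality (the defining property of a convex function) directly to the convex combination representing an arbitrary point of $\mathcal{H}_k$. The statement is essentially the fact that a convex function attains its maximum on a polytope at one of the vertices, specialized to the inequality $f_{\mathbf{w}_k}\leq 0$.

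First, I would invoke the hypothesis that $\mathcal{H}_k=\mbox{coh}(\{\mathbf{h}_k^{(j)}\}_{j=1}^{J_k})$. By definition of the convex hull, any $\mathbf{h}_k\in\mathcal{H}_k$ admits a representation $\mathbf{h}_k=\sum_{j=1}^{J_k}\alpha_j\,\mathbf{h}_k^{(j)}$ with coefficients $\alpha_j\geq 0$ satisfying $\sum_{j=1}^{J_k}\alpha_j=1$. Next, because $f_{\mathbf{w}_k}$ is convex in its argument, an inductive application of Jensen's inequality over the finite sum yields
\begin{equation}
f_{\mathbf{w}_k}(\mathbf{h}_k)=f_{\mathbf{w}_k}\!\left(\sum_{j=1}^{J_k}\alpha_j\,\mathbf{h}_k^{(j)}\right)\;\leq\;\sum_{j=1}^{J_k}\alpha_j\,f_{\mathbf{w}_k}\!\left(\mathbf{h}_k^{(j)}\right).
\end{equation}
Finally, combining the vertex hypothesis $f_{\mathbf{w}_k}(\mathbf{h}_k^{(j)})\leq 0$ with the nonnegativity of the weights $\alpha_j$ shows that the right-hand side is a nonnegative combination of nonpositive numbers, hence $\leq 0$. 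Since $\mathbf{h}_k$ was an arbitrary element of $\mathcal{H}_k$, the conclusion $f_{\mathbf{w}_k}(\mathbf{h}_k)\leq 0$ holds on the whole region.

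There is no real obstacle here: the result is a textbook consequence of convexity, and the only ingredient beyond the definition of convex hull is the standard extension of Jensen's inequality from two-point to finite convex combinations, which itself follows by a one-line induction on $J_k$. The practical payoff, which I would emphasize in the closing sentence, is that the infinite family of C1 constraints parameterized by $\mathbf{h}_k\in\mathcal{H}_k$ collapses to the finite collection obtained by evaluating at the $J_k$ vertices of the polyhedron, thereby enabling the SOCP reformulation announced at the beginning of Section~\ref{sec:Solution}.
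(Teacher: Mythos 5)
Your proof is correct and follows essentially the same route as the paper's: express an arbitrary point of $\mathcal{H}_k$ as a convex combination of the vertices and apply Jensen's inequality for finite convex combinations to bound $f_{\mathbf{w}_k}(\mathbf{h}_k)$ by $\sum_j \alpha_j f_{\mathbf{w}_k}(\mathbf{h}_k^{(j)}) \leq 0$. The only cosmetic difference is that the paper also writes out the specific SOCP constraint function to which the lemma is applied, whereas you keep the argument at the level of a generic convex $f_{\mathbf{w}_k}$, which matches the lemma as stated.
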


\begin{proof} As the uncertainty region $\mathcal{H}_k$ is the convex hull of their $J_k$ vertices, any element in $\mathcal{H}_k$ can be expressed as a linear convex combination of their vertices:
 \begin{equation}
 \mathbf{h}_k = \sum_{j=1}^{J_k} \alpha_j  \mathbf{h}_k^{(j)}, \quad \alpha_j>0, \quad \sum_{j=1}^{J_k} \alpha_j =1.
 \end{equation}

 Consider now the following function \begin{equation}
  f_{\mathbf{w}_k}(\mathbf{h}_k)=\sqrt{\sigma_k^2+\sum_{i\neq k}\rho^2 (\mathbf{h}_k^T\mathbf{w}_i)^2} -  \frac{\rho}{\sqrt{\gamma_k}}\mathbf{h}_k^T\mathbf{w}_k
  \end{equation}
  that is convex w.r.t. $\mathbf{h}_k$. Due to its convexity, it fulfills:
  \begin{equation}
 f_{\mathbf{w}_k}(\mathbf{h}_k) =  f_{\mathbf{w}_k}\left(\sum_{j=1}^{J_k} \alpha_j  \mathbf{h}_k^{(j)}\right) \leq \sum_{j=1}^{J_k} \alpha_j f_{\mathbf{w}_k}(\mathbf{h}_k^{(j)}) .
  \end{equation}
 Therefore, if $f_{\mathbf{w}_k}(\mathbf{h}_k) \leq 0$ for all the vertices $\{\mathbf{h}_k^{(j)}\}$, then $f_{\mathbf{w}_k}(\mathbf{h}_k)$ will be less than or equal to 0 for all  $\mathbf{h}_k \in \mathcal{H}_k$.
\end{proof}

Thanks to Lemma \ref{lemma:2}, we can re-write problem \eqref{eq:imperfect_CSI_0} as:
 
\begin{equation}
\begin{aligned}
 \underset{\{\mathbf{w}_k\},v}{\text{minimize  }}
 & v \\
 \text{subject to  }
 & C1:\sqrt{\sigma_k^2+\sum_{i\neq k}\rho^2 |\mathbf{w}_i^T\mathbf{h}_k^{(j)}|^2} \leq  \frac{\rho}{\sqrt{\gamma_k}} \mathbf{w}_k^T\mathbf{h}_k^{(j)},    \quad j=1,...,J_k, \quad k = 1, \ldots, K,\\
 & C2: \sum_k{A_k|\mathbf{e}_l^T\mathbf{w}_k|} \leq v, \quad l = 1, \ldots, L,\\& C3: 0\leq v \leq \min(\beta, P_{\max}-\beta), \\
 & C4: \mathbf{w}_k^T\mathbf{h}_k^{(j)} \geq 0, \quad j=1,...,J_k, \quad k = 1, \ldots, K.
\end{aligned}
\label{eq:imperfect_CSI_vertices}
\end{equation}

Note that we have been able to ensure the positiveness of $\mathbf{w}_k^T\mathbf{h}_k$ in the whole uncertainty region by just forcing it at its vertices through the finite set of $\sum_{k=1}^Kk_i$ convex constraints in C4. As a result, the original problem turns out into a \ac{socp} with a finite number of constraints for which several extremely efficient algorithms and tools are available, such as the interior point methods and the SeDuMi software package \cite{Boyd04,sedumi}.

A problem similar to \eqref{eq:imperfect_CSI_0} was considered in Ref.  \cite{bengtsson2002} (Section 18.5.1) for  RF \ac{mumiso} channels, for which channel and beamformers are complex-valued vectors. First, the authors considered perfect \ac{csi} and solved the problem by forcing the imaginary part of $\mathbf{w}_k^T\mathbf{h}_k$ to be zero and the real part to be positive (Equation~(18.29) in \cite{bengtsson2002}). Then, they extended the solution to a robust scheme modeling the uncertainty through the lower and upper bounds of the channel correlation matrix (Equation~(18.4) in Ref.  \cite{bengtsson2002}). Note that this model does not fit the case where the channel uncertainty comes from quantization, which results in uncertainty regions different from the ones considered in Ref. \cite{bengtsson2002}. In addition, the strategy followed in Ref.  \cite{bengtsson2002} for the perfect CSI case cannot be directly applied when having an infinite set of channels (a region) in the constraints instead of a single channel. For RF channels, unless all the channels are 0, it is not possible to ensure a null imaginary part (and a positive real part) of $\mathbf{w}_k^T\mathbf{h}_k$ for all the channels in the region. Fortunately, in \ac{vlc} systems, the channels do not have imaginary parts, and we have proved in Lemma \ref{lemma:1} that the real part of $\mathbf{w}_k^T\mathbf{h}_k$ cannot change its sign within the whole uncertainty region.

\section{Implementation Aspects}
\label{sec:Implementation}
A practical implementation of the proposed scheme requires each user to estimate the \ac{dl} channel from each LED and send a quantized version of this information through a feedback channel. In this section, we discuss some issues regarding the quantization and the impact on the complexity of the optimization problem presented in Sections \ref{sec:Problem} and \ref{sec:Solution}.

The most straightforward quantization strategy is to independently quantize the $L$ channel components (\emph{scalar quantization}) using a uniform quantizer of $B$ bits per channel component. However, depending on the channel amplitudes distribution, some strategies can be taken to reduce the quantization error power without increasing the number of bits. For example, if lower amplitudes are more likely than higher amplitudes, more efficient usage of the available representation levels is achieved if a non-uniform quantizer is employed (for example, a uniform quantizer of the logarithm of the channel). 

For uniform or non-uniform scalar quantization with $B$ bits per channel component, the number of possible uncertainty regions is $N=2^{L\cdot B}$. As any uncertainty region reported has $2^L$ vertices, the total number of constraints represented by C1, $K\cdot 2^L$, increases linearly with $K$ and exponentially with $L$.

On the other hand, in the scenario considered, some combinations of the amplitudes of the \ac{led}s are not possible. For example, as each \ac{led} location is different, a user cannot have a maximum channel simultaneously from all LEDs. A joint quantization of channel components (\emph{vector quantization}) can exploit this feature to reduce the number of CSI information bits. The total number of constraints will depend again on the number of vertices of the uncertainty regions. Nevertheless, the vector quantization case is outside the scope of this paper and we will restrict the simulations to the scalar quantization case.

\section{Simulation Results}
\label{sec:Results}
In this section, we present some simulation results. We consider $L=6$ active LEDs serving $K$ users (see Figure~\ref{fig:escenario} as an example of the simulated setup, similar to the one considered in  Ref. \cite{RoserGlobecom}). We set the target \ac{snir} to $15$ dB for each user, the maximum power threshold to 20 W, and take the rest of the parameters from Ref. \cite{RoserGlobecom}. All the simulation results presented in this section are averaged over different scenario realizations. For each scenario realization, the LEDs are positioned as shown in Figure \ref{fig:escenario} with a fixed height of 2.4 m. The users' (PDs) locations are random. The $x$ and $y$ coordinates of the PDs are drawn from a uniform random distribution in the simulated area, whereas the $z$ coordinate (height of PD) is drawn from a uniform random distribution between 0.5 and 1 m. The channels are then generated according to the relative positions between transmitters (LEDs) and receivers (PDs).

\begin{figure}[t]
	\centering
		 \includegraphics[trim = 0mm 0mm 0mm 0mm, clip=true, width= 0.8\columnwidth]{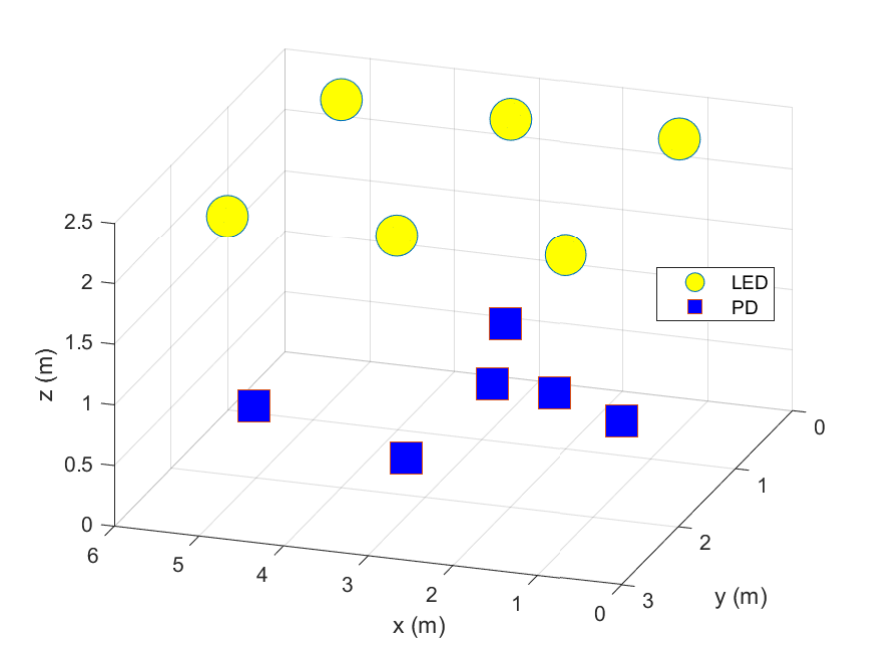}

	\caption{Distribution of light emitting diodes (LEDs) (fixed positions) and photo diodes (PDs) (random positions) in a 3D space.}
	\label{fig:escenario}
\end{figure}

For each scenario realization, we design precoders that serve $K$ users jointly. The quantized channel is the only CSI available at the transmitter side to design robust and non-robust precoders. For the simulations, we have considered the particular case of independent quantization of the channel between each LED and each PD, that is, scalar quantization. Each of these channel coefficients is a real positive magnitude that can take a wide margin of values. To cope with this wide margin, we perform a logarithm transformation and then a uniform quantization. In other words, we consider a uniform quantization of the value in dB of the channel magnitude. The dynamic margin of the quantizer, i.e., the minimum and maximum values represented by the quantizer, are determined statistically using $10^6$ realizations of users' positions and channel values of the setup considered in the simulations. We design the non-robust precoder using the quantized channel as if it were the actual channel (despite being different). The actual channel is within an uncertainty region around the quantized channel. As explained in the previous sections, we design the robust precoder to achieve the target \ac{snir} for any channel within this region. For the computation of the non-robust and robust solutions, we have used CVX, a package for specifying and solving convex programs \cite{cvx, gb08}.

The number of bits impacts the size of the channel uncertainty regions, making the constraints of the robust precoder design problem more challenging. As a result, a higher optical power is needed to fulfill the constraints. Note that, depending on the channel realizations, fulfilling the constraints may require an unacceptable amount of power or not be possible at all; that is, it is not possible to fulfill constraint C3. In Figure~\ref{fig:exito}, we show the percentage of feasibility (i.e., successful designs) for different numbers of jointly served users and quantization bits. As expected, the percentage reduces (i.e., the precoder design becomes more complex) as the number of bits decreases or as the number of users increases. 

\begin{figure}[t]
	\centering
	 \includegraphics[trim = 0mm 0mm 0mm 0mm, clip=true, width= 0.8\columnwidth]{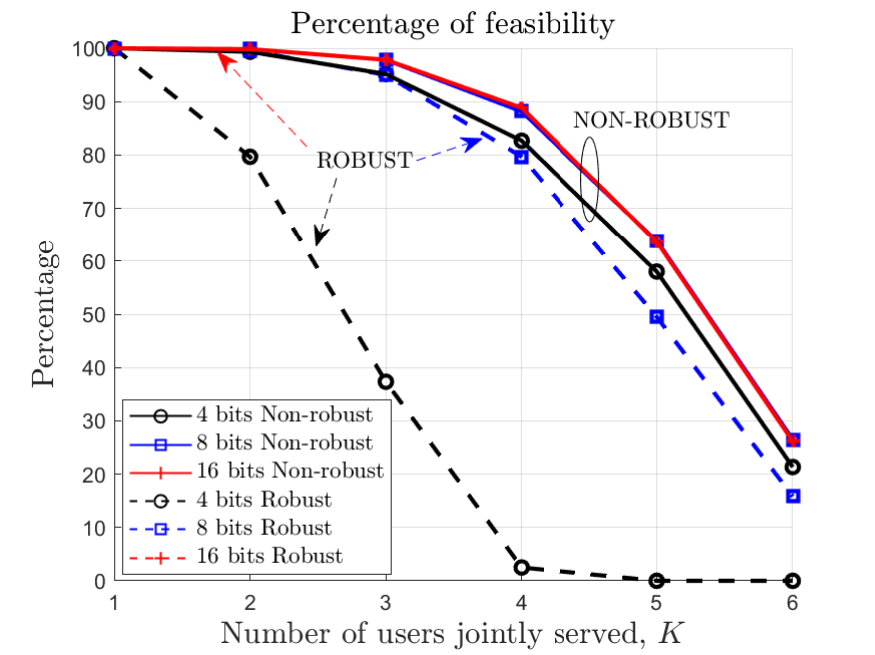}
	
	\caption{Percentage of successful designs (averaged over different channel realizations) for different numbers of jointly served
users and quantization bits.
}
	\label{fig:exito}
\end{figure}

It is not surprising that the percentage of feasibility cases of the robust design is lower than for the non-robust design due to the much harder constraints of the robust approach. Note, however, that the non-robust solution will perform much worse than the robust counterpart for channels different from the quantized version. To illustrate this, Figure~\ref{fig:worst_corner} shows the \ac{snir} for the worst possible channel, which always corresponds to a corner of the uncertainty region ($\mathbf{h}_k^{(j)}, j=1,...,J_k$, $k = 1, \ldots, K$ of constraint C1), as proved in this paper, for the non-robust and robust approaches. Therefore, if a feasible solution exists, the robust precoder fulfills the \ac{snir} constraint with equality at this corner by design, which is set to 15 dB in these simulations, and, consequently, in all of the uncertainty region. As a result, as shown in Figure~\ref{fig:worst_corner}, for the robust precoder, the \ac{snir} at the worst-possible channel of the uncertainty region is always equal to the target \ac{snir}. In contrast, the non-robust design cannot guarantee that, especially when the number of quantization bits is low.

\begin{figure}[t]
	\centering
	 \includegraphics[trim =0mm 0mm 0mm 0mm, clip=true, width= 0.8\columnwidth]{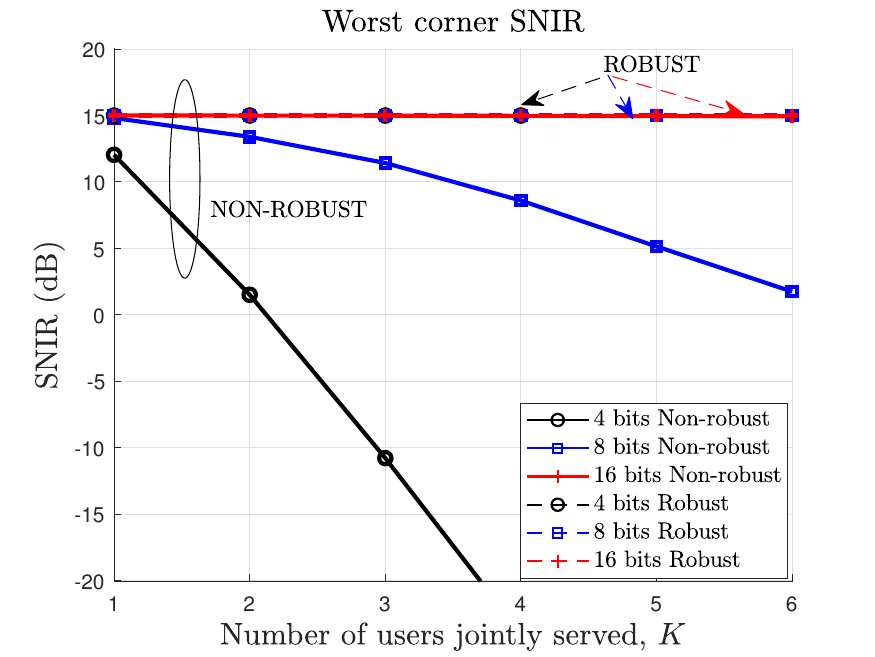}
	
	\caption{Signal to noise plus interference ratio (SNIR) at the worst corner of the uncertainty regions of the $K$ users group (averaged over different channel realizations). The target \ac{snir} is 15 dB.  
	}
	\label{fig:worst_corner}
\end{figure}
	
In practice, we may be interested in finding how well the feasible designs will perform for the real channels experienced by the users. To that end, we compute the \ac{snir} using the actual channels and the precoders designed using the quantized channels (as the actual channels information is not available at the design phase). In Figure~\ref{fig:snr_real}, we show the performance of the robust and non-robust precoders in terms of the actual \ac{snir} of the worst user within each group of $K$ users, averaged over different channel realizations. The actual \ac{snir} is the \ac{snir} computed with the actual channels, which are different to the quantized channels used for the beamformer design. For the non-robust precoder, the achieved \ac{snir} is below the target \ac{snir}. This is because the non-robust design incorrectly assumes that the quantized channel is equal to the actual channel. On the other hand, the robust precoder is designed to achieve the target \ac{snir} for any channel within the uncertainty region. The actual channel always lies within the uncertainty region. However, it is not necessarily the worst-case channel, which is located at a corner of the uncertainty region, as shown in previous sections. Consequently, with the robust beamformer, the actual users' SNIR can be above the target \ac{snir}, particularly if the uncertainty region increases, which happens when the number of quantization bits decreases. As illustrated previously in Figure~\ref{fig:exito}, for 4 bits and 5 users or more, no robust precoder can achieve the target \ac{snir} with a power lower than or equal to $P_{\max}$. This is the reason why no \ac{snir} values are shown for these cases in Figures~\ref{fig:worst_corner} and \ref{fig:snr_real}. Note that this does not invalidate our approach. It just tells us that for 5 users or more, we need to increase the number of bits to fulfill the target \ac{snir} under channel uncertainty with a power lower than or equal to $P_{\max}$.

\begin{figure}[t]
	\centering
		 \includegraphics[trim = 0mm 0mm 0mm 0mm, clip=true, width= 0.8\columnwidth]{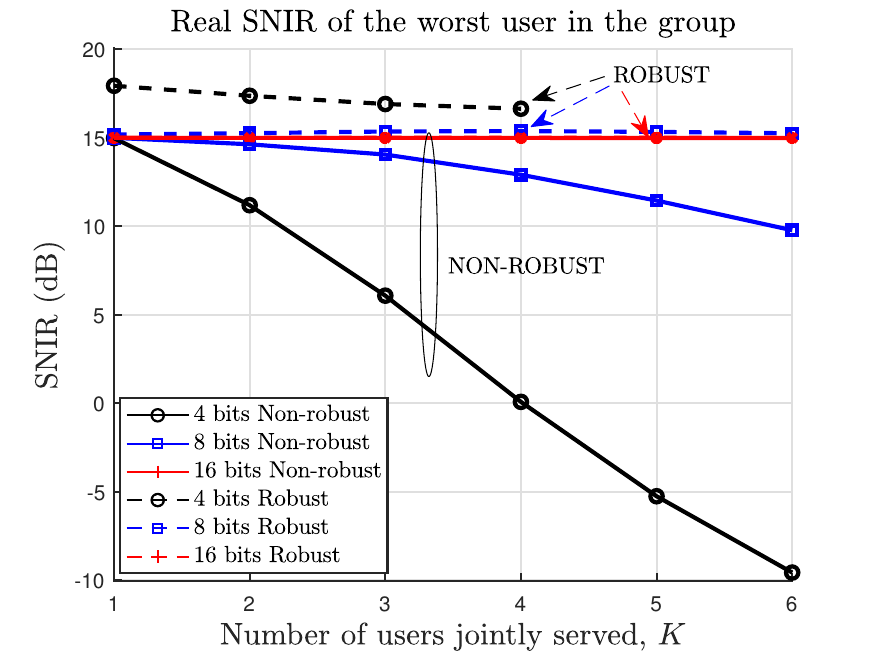}

	\caption{\ac{snir} of the worst user within each group of $K$ users (averaged over different channel realizations). The target \ac{snir} is 15 dB. 
	}
	\label{fig:snr_real}
\end{figure}

To ensure feasibility, given the number of bits $B$, we can reduce the number of simultaneously served users $K$ until a feasible precoder can be found. Moreover, as proved in Section \ref{sec:Solution}, we can guarantee that the robust precoder will fulfill the target quality for all the users in the set. Table \ref{tab:SNR} shows the \ac{snir} of the worst user within each set of jointly served users (with the maximum number of users possible provided that a feasible solution can be found). Note that, in the case of the non-robust precoder, the target quality cannot be achieved. More importantly, it is impossible to decide beforehand if the non-robust precoders will or will not do the job because only the quantized channel information (and not the actual channel) is available at the design stage.

\begin{table}
\caption{\label{tab:SNR}Real SNIR (dB) of the worst user within each feasible group (averaged over different channel realizations).}
\begin{center}
\begin{tabular}[t]{|c|c|c|c|}
\hline
\textbf{Number of Bits} & \boldmath{$B=4$} & \boldmath{$B=8$}  & \boldmath{$B=16$} \\
Non-robust design & $-3.5459$  & $11.3752$  & $14.9915$\\
Robust design & $17.3103$  & $15.3762$  & $15.0025$  \\ 
\hline
\end{tabular}
\end{center}
\end{table}

\section{Conclusions and Discussion}
\label{sec:Conclusions}
In this paper, we have designed the optimum \ac{mumiso} precoder robust against channel quantization errors in \ac{vlc} systems that minimizes the transmitted optical power while guaranteeing a target \ac{snir} per user. We show that the exact form for this precoder can be obtained as the solution of an \ac{socp} problem. As expected, the precoder design becomes more difficult (i.e., requires more power) as the number of bits decreases or as the number of users increases. Nevertheless,  provided the budget power is sufficient, we can guarantee that the robust designs achieve the target \ac{snir} per user. In contrast, for non-robust designs, such a guarantee is not possible.

As future work, we will consider the robust precoding design problem in the presence of both quantization noise and channel estimation errors. This could include also the case of vector quantization as a way to improve the quality of the \ac{csi} without increasing the number of quantization bits. Another research line will be the extension of the proposed scheme to the \ac{mimo} case, where each receiver has more than one \ac{pd}. Finally, and given the non-linear responses of the \ac{led}s and the \ac{pd}s, it will be convenient to study how these non-linearities affect the proposed designs and analyze how to modify the designs to be robust also against them.

\section{Materials and Methods}
\label{sec:materials}
The results in this paper can be replicated by implementing the algorithms described in it. The authors disclose that the code is protected and cannot be made public. Interested readers can get in touch with the authors to receive assistance.

\section{Author Contributions and Conflicts of Interest}
Conceptualization, O.M. and A.P.-I.; methodology, O.M. and A.P.-I.; software, O.M. and G.S.A.; validation, O.M. A.P.-I., and G.S.A.; formal analysis, O.M. and A.P.-I.; investigation, O.M. and A.P.-I.; writing---original draft preparation, O.M., A.P.-I., and G.S.A.; visualization, O.M.; supervision, O.M. and A.P.-I.; project administration, A.P.-I.; funding acquisition, A.P.-I. All authors have read and agreed to the published version of the manuscript.

The authors declare no conflict of interest. The funders had no role in the design of the study; in the collection, analyses, or interpretation of data; in the writing of the manuscript; or in the decision to publish the results.


\section{Abbreviations}
The following abbreviations are used in this manuscript:\\

\noindent 
\begin{tabular}{@{}ll}
CSI & Channel state information\\
DC & Direct current\\
DL & Downlink\\
FDD & Frequency division duplexing\\
LED & Light emitting diode\\
MIMO & Multiple-input multiple-outpout\\
MSE & Mean square error\\
MU-MISO & Multi-user multiple-input single-output\\
PD & Photo diode\\
RF & Radio frequency\\
SER & Symbol error rate\\
SNIR & Signal to noise plus interference ratio\\
SNR & Signal to noise ratio\\
SOCP & Second order cone programming\\
TDD & Time division duplexing\\
UL & Uplink\\
VLC & Visible light communication\\
w.r.t. & with respect to\\
ZF & Zero forcing
\end{tabular}

\end{document}